\begin{document}

\title{Monomial Testing and Applications}

\titlerunning{Monomial Testing and Applications}

\author{Shenshi Chen}

\authorrunning{S. Chen}

\institute{Department of Computer Science,
University of Texas-Pan American, Edinburg, TX 78539, USA.
\email{schen@broncs.utpa.edu}}

\maketitle

\begin{abstract}
 In this paper, we devise two algorithms for the problem of testing $q$-monomials
 of degree $k$ in any multivariate
polynomial represented by a circuit, regardless of the primality of $q$. One is
an $O^*(2^k)$ time randomized algorithm.
The other is an $O^*(12.8^k)$ time deterministic algorithm
for the same $q$-monomial testing problem but requiring the polynomials to be represented by tree-like circuits.
Several applications of $q$-monomial testing are also given, including a deterministic
$O^*(12.8^{mk})$ upper bound for the $m$-set $k$-packing problem.
\\
\\
{\bf Keywords:} Group algebra; complexity; multivariate polynomials; monomials; monomial testing;
 randomized algorithms; derandomization.
\end{abstract}

\section{Introduction}

Recent research on testing multilinear
monomials and $q$-monomials in multivariate polynomials \cite{koutis08,williams09,chen11,chen11a,chen11b,chen12a,chen12b} requires
that $Z_q$ be a field, which is true when $q\ge 2 $ is prime.
When $q>2$ is not prime, $Z_q$ is no longer a field, hence
the group algebra based approaches in \cite{koutis08,williams09,chen11b,chen12b}
become inapplicable.
When $q$ is not prime, it remains open whether the problem of testing $q$-monomials
in a multivariate polynomial can be solved in some compatible complexity,
such as $O^*(c^k)$ time for a constant $c\ge 2$.
Our work in \cite{bill12} presents
a randomized $O^*(7.15^k)$ algorithm for testing $q$-monomials of degree $k$
in a multivariate polynomial that is represented by a tree-like circuit.
This algorithm works for any fixed integer $q\ge 2$, regardless of $q$'s primality.
Moreover, for prime $q>7$,
it provides us with some substantial improvement on the time complexity of
the previously known algorithm \cite{chen11b,chen12b} for testing $q$-monomials.

Randomized algebraic techniques have recently led to the once fastest
randomized algorithms of time $O^*(2^k)$ for the  $k$-path
problem and other problems \cite{koutis08,williams09}. Another recent
seminal example is the improved $O(1.657^n)$ time randomized algorithm for the Hamiltonian
path problem by Bj\"{o}rklund~\cite{Bjorklund2010}. This algorithm provided a positive answer to
the question of whether the Hamiltonian path problem can be solved in time $O(c^n)$ for some constant
$1<c < 2$, a challenging problem that had been open for
half of a century. Bj\"{o}rklund {\em et al.} further extended the above randomized algorithm
to the $k$-path testing problem with $O^*(1.657^k)$ time complexity \cite{Bjorklund2010b}.
Very recently, those two algorithms were simplified further by Abasi and Bshouty \cite{bshouty13}.

This paper consists of three key contributions: The first is an $O^*(2^k)$ time randomized algorithm that
gives an affirmative answer to the $q$-monomial testing problem for polynomials represented by circuits,
regardless of the primality of $q\ge 2$. We generalize the circuit reconstruction and
variable replacements proposed in \cite{bill12} to transform the $q$-monomial testing problem,
for polynomials represented by a circuit,
into the multilinear monomial testing problem and furthermore enabling
the usage of the group algebraic approach originated
by Koutis \cite{koutis08} to help resolve the $q$-monomial testing problem.
The second is an  $O^*(12.8^k)$ deterministic algorithm for testing $q$-monomials
in multivariate polynomials represented by tree-like circuits.
Inspired by the work in \cite{chen11b,chen12b}, we devise this deterministic algorithm by derandomizing
the first randomized algorithm for tree-like circuits with the help of
the perfect hashing functions by Chen {\em et al.}
\cite{jianer-chen07} and the deterministic polynomial identity testing algorithm by Raz
and Shpilka \cite{raz05} for noncommunicative polynomials.
The third is to exhibit several applications of
$q$-monomial testing to designing algorithms for concrete problems.
Specifically, we show how $q$-monomial testing can be applied to
the non-simple $k$-path testing problem, the generalized $m$-set $k$-packing problem,
and the generalized $P_2$-Packing problem. In particular, we design
a deterministic algorithm for solving the $m$-set $k$-packing problem in $O^*(12.8^{mk})$,
which is, to our best knowledge, the best upper bound for deterministic algorithms
to solve this problem.

\section{Notations and Definitions}
For variables $x_1, \dots, x_n$,  for
$1\le i_1 < \cdots <i_k \le n$, $\pi =x_{i_1}^{s_1}\cdots
x_{i_t}^{s_t}$ is called a monomial. The degree of $\pi$, denoted
by $\mbox{deg}(\pi)$, is $\sum\limits^t_{j=1}s_j$. $\pi$ is multilinear,
if $s_1 = \cdots = s_t = 1$, i.e., $\pi$ is linear in all its
variables $x_{i_1}, \dots, x_{i_t}$. For any given integer $q\ge
2$, $\pi$ is called a $q$-monomial if $1\le s_1, \dots, s_t \le q-1$.
In particular, a multilinear monomial is the same as  a $2$-monomial.

An arithmetic circuit, or circuit for short, is a directed acyclic
graph consisting of $+$ gates with unbounded fan-ins, $\times$ gates with two
fan-ins, and terminal nodes that correspond to variables. The size,
denoted by $s(n)$, of a circuit with $n$ variables is the number
of gates in that circuit. A circuit is considered a tree-like circuit
 if the fan-out of every gate is at most one, i.e.,
the underlying directed acyclic graph that excludes all the terminal nodes is a tree.
In other words, in a tree-like circuit, only the terminal nodes can have more than one fan-out
(or out-going edge).

Throughout this paper, the $O^*(\cdot)$ notation is used to
suppress $\mbox{poly}(n,k)$ factors in time complexity bounds.

By definition, any polynomial $F(x_1,\dots,x_n)$ can be expressed
as a sum of a list of monomials, called the sum-product expansion.
The degree of the polynomial is the largest degree of its
monomials in the expansion. With this expanded expression, it is trivial to
see whether $F(x_1,\dots,x_n)$ has a multilinear monomial, or a
monomial with any given pattern. Unfortunately, such an expanded expression is
essentially problematic and infeasible due to the fact that a
polynomial may often have exponentially many monomials in its
sum-product expansion. The challenge then is to test whether $F(x_1,\dots,x_n)$
has a multilinear monomial, or any other desired monomial, efficiently but
without expanding it into its sum-product representation.

For any integer $k \ge 1$, we consider the group
$Z^k_2$ with the multiplication $\cdot$ defined as follows. For
$k$-dimensional column vectors $\vec{x}, \vec{y} \in Z^k_2$ with
$\vec{x} = (x_1, \ldots, x_k)^T$ and $\vec{y} = (y_1, \ldots,
y_k)^T$, $\vec{x} \cdot \vec{y} = (x_1+y_1, \ldots, x_k+y_k)^T.$
$\vec{v}_0=(0, \ldots, 0)^T$ is the zero element in the group.
For any field $\mathcal{ F}$, the group algebra $\mathcal{F}[Z^k_2]$ is defined as
follows. Every element $u \in \mathcal{F}[Z^k_2]$ is a linear sum of the form
\begin{eqnarray}\label{exp-2}
 u &=& \sum_{\vec{x}_i\in Z^k_2,~ a_{i}\in \mathcal{F}} a_{i} \vec{x}_i.
\end{eqnarray}
For any element
$v = \sum\limits_{\vec{x}_i\in Z^k_2,~ b_{i}\in \mathcal{F}} b_{i}
\vec{x}_i$,
We define
\begin{eqnarray}
 u + v  &=& \sum_{a_{i},~ b_{i}\in \mathcal{F},~  \vec{x}_i\in Z^k_2}  (a_i+b_i)
 \vec{x}_i, \  \mbox{and} \nonumber\\
u \cdot v &=& \sum_{a_i,~ b_j\in \mathcal{F},~ \mbox{ and }~\vec{x}_i,~ \vec{y}_j\in Z^k_2}  (a_i b_j)
(\vec{x}_i\cdot \vec{y}_j). \nonumber
\end{eqnarray}
For any scalar $c \in \mathcal{F}$,
\begin{eqnarray}
 c u &=& c \left(\sum_{\vec{x}_i\in Z^k_2, \ a_i\in \mathcal{F}} a_{i} \vec{x}_i\right)
 = \sum_{\vec{x}_i\in Z^k_2,\  a_{i}\in \mathcal{F}} (c a_{i})\vec{x}_i. \nonumber
\end{eqnarray}
The zero element in the group algebra $ \mathcal{ F}[Z^k_2]$ is
$ {\bf  0} = \sum_{\vec{v}} 0\vec{v}$, where $0$ is the zero element in $\mathcal{ F}$
and $\vec{v}$ is any vector in $\displaystyle Z_2^k$. For example,
${\bf  0} = 0\vec{v_0} = 0\vec{v}_1 + 0\vec{v}_2 + 0\vec{v}_3$,
for any $\displaystyle \vec{v}_i \in Z^k_2$, $1\le i\le 3$.
The identity element in the group algebra $\displaystyle \mathcal{ F}[Z^k_2]$ is
$ {\bf 1} = 1 \vec{v}_0 =  \vec{v}_0$, where $1$ is the identity element in $\mathcal{ F}$.
For any vector $\vec{v} =(v_1, \ldots, v_k)^T \in Z_2^k$, for
$i\ge 0$, let
$ (\vec{v})^i = (i v_1, \ldots, i v_k)^T.$
 In particular, when the field $\mathcal{ F}$ is  $Z_2$ (or in general, of characteristic $2$),
 in the group algebra $\mathcal{ F}[Z_2^k]$,
for any $\vec{z}\in Z_2^k$ we have $(\vec{v})^0 = (\vec{v})^2 =
\vec{v}_0$, and $\vec{z} + \vec{z} = \vec{0}$.

\section{A New Transformation}

In this section, we shall design a new method to transform any given polynomial $F$ represented by a circuit $\mathcal{ C }$ to a new polynomial
$G$ represented by a new circuit $\mathcal{ C''}$ such that
the $q$-monomial testing problem for $F$ is reduced to the multilinear monomial testing problem for $G$.
This method is an extension of the circuit reconstruction and randomized variable replacement methods proposed by us in \cite{bill12}.

To simplify presentation, we assume that if any given
polynomial has $q$-monomials in its sum-product expansion, then the degrees
of those multilinear monomials are at least
$k$ and one of them has degree exactly $k$. This assumption is feasible,
because when a polynomial has $q$-monomials of degree $\le k$, e.g., the least degree of those is $\ell$ with $1\le \ell < k$,
then we can multiply
the polynomial by a list of $k-\ell$ new variables so that the resulting polynomial
will have $q$-monomials with degrees satisfying the aforementioned assumption.

\subsection{A New Circuit Reconstruction Method}\label{CR}

In this section and the next, we shall extend the transformation methods designed in \cite{bill12} to general circuits.
The circuit reconstruction and variable replacement methods developed by us in \cite{bill12} work for
tree-like circuits only. In essence, the methods are as follows:
Replace each original variable $x$ in the polynomial by a $+$ gate $g$; for each outgoing edge of $x$, duplicate a copy of
$g$; for each $g$, allow it to receive inputs from $q-1$ many new $y$-variables; for each edge from a $y$-variable to a duplicated
gate $g$, replace it with a new $\times$  gate that receives inputs from the $y$-variable and a new $z$-variable that then feeds the output to
$g$. Additionally, the methods add a new $\times$ gate $f'$ that multiplies the output of $f$ with a new $z$-variable
for each $\times$ gate $f$ in the original circuit.

For any given polynomial $F(x_1,x_2,\ldots,x_n)$ represented by a circuit $\mathcal{ C}$ of size $s(n)$,
we first reconstruct the circuit $\mathcal{ C}$ in three steps as follows:

{\bf Duplicating $+$ gates.} Starting at the bottom layer of the circuit $\mathcal{ C}$, for each $+$ gate $g$ with outgoing edges
$f_1, f_2,\ldots, f_{\ell}$, replace $g$ with  $\ell$ copies  $g_1,g_2,\ldots,g_{\ell}$ such that
each $g_i$ has the same input as $g$, but the only outgoing edge of $g_i$ is $f_i$, $1\le i\le \ell$.

{\bf Duplicating terminal nodes.} For each variable $x_i$,
if $x_i$ is the input to a list of gates $g_1, g_2, \ldots, g_{\ell}$, then create $\ell$
terminal nodes $u_1, u_2, \ldots, u_{\ell}$ such that each of them represents
a copy of the variable $x_i$ and $g_j$ receives input from $u_j$, $1\le j\le \ell$.

Let $\mathcal{ C}^*$ denote the reconstructed circuit after the above two reconstruction steps.
Obviously, both circuits $\mathcal{ C}$ and $\mathcal{ C^*}$ compute the same polynomial $F$.

{\bf Adding  new $\times$ gates and new variables}.
Having completed the reconstruction to obtain $\mathcal{ C}^*$,
we then expand it to a new circuit $\mathcal{ C'}$ as follows.
For every edge $e_i$ in $\mathcal{ C^*}$ (including every edge between a gate and a terminal node)
such that $e_i$ conveys the output of $u_i$ to $v_i$,
add a new $\times$ gate $g_i$ that multiplies the output of $u_i$ with a new variable $z_i$ and passes
the outcome to $v_i$.

Assume that a list of $h$ new $z$-variables
$z_1, z_2, \ldots, z_h$ have been introduced into the circuit $\mathcal{ C'}$.
Let $F'(z_1, z_2, \ldots, z_h, x_1, x_2,
\ldots, x_n)$ be the new polynomial represented by $\mathcal{ C'}$.

\begin{example}\label{example-1}
Consider $F(x_1,x_2) = 16x_1^5 + 32x_1^3x_2 + 2x_1^2x_2 + 16x_1x_2^2 + 2x_2^2$.
Figure 1 shows the circuit $\mathcal{ C}$ that computes $F(x_1,x_2)$.
Figures 2 and 3 show  the circuit $\mathcal{ C^*}$ and the circuit $\mathcal{ C'}$, respectively.
\end{example}

\begin{lemma}\label{rtm-lem1}
Let the $t$ be the length of longest path from the root gate of $\mathcal{ C}$ to its terminal nodes.
$F(x_1,x_2,\ldots,x_n)$ has a monomial $\pi$ of degree $k$  in its sum-product expansion if and only if there is
a monomial $\alpha \pi$ in the sum-product expansion of $F'(z_1, z_2, \ldots, z_h, x_1, x_2,\ldots, x_n)$
such that $\alpha$ is a multilinear monomial of $z$-variables with degree $\le tk + 1$.
Furthermore, if $\pi$ occurs more than once in the sum-product expansion of $F'$, then
every occurrence of $\pi$ in $F'$ has a unique coefficient $\alpha$; and any two different monomials of $x$-variables
in $F'$ will have different coefficients that are multilinear products of $z$-variables.
\end{lemma}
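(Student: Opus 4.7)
The plan is to interpret each monomial of $F$ as the output of a parse tree of $\mathcal{C}$ --- a subcircuit obtained by selecting one input at every $+$ gate and both inputs at every $\times$ gate --- and to track what happens to such a parse tree under the reconstruction. The sum-product expansion of $F$ is, tautologically, the formal sum over all parse trees of the products of their leaves, and analogously for $F'$ over parse trees of $\mathcal{C}'$, where each edge traversal additionally contributes the attached $z$-variable to the coefficient.

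First I would check that the first two reconstruction steps preserve the polynomial computed: duplicating $+$ gates and duplicating terminals yields $\mathcal{C}^*$ with the same $F$, and there is a natural bijection between parse trees of $\mathcal{C}$ and of $\mathcal{C}^*$ (each original parse tree being ``unfolded'' through the duplicate copies). The third step, inserting a fresh $\times$ gate and variable $z_i$ on every edge of $\mathcal{C}^*$, then converts each parse tree of $\mathcal{C}^*$ into a parse tree of $\mathcal{C}'$ producing a monomial $\alpha\pi$, where $\pi$ is the original $x$-monomial and $\alpha$ is the product, with multiplicity, of the $z_i$'s on the traversed edges.

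Next I would prove the multilinearity of $\alpha$ and the two uniqueness statements together, by showing that each parse tree of $\mathcal{C}^*$ traverses every edge at most once and that the edge set of a parse tree recovers the tree uniquely. The fan-out-one property of $+$-copies and terminals in $\mathcal{C}^*$ is the driving fact: once a parse tree enters such a node, its outgoing edge is forced, so no such edge can be used twice; with this in hand, distinct parse trees produce distinct $\alpha$'s, which immediately gives both uniqueness claims. The degree bound $\deg(\alpha)\le tk+1$ follows by a counting argument, since a parse tree has at most $k$ leaves and depth at most $t$, hence at most $tk$ internal edges plus a single edge from the root.

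The main technical point I expect to wrestle with is the claim that no edge of $\mathcal{C}^*$ is traversed twice in a single parse tree, because $\times$ gates with multiple outgoing edges are not explicitly duplicated by the reconstruction. The argument must show that the only way a $\times$ gate $g$ could be entered twice inside a single parse tree is through a common $+$-gate ancestor, and that after the bottom-up duplication of $+$ gates such a configuration is precluded because the relevant ancestor would split into disjoint copies. An induction on the height of gates, following the order in which $+$ gates were duplicated, seems the cleanest route.
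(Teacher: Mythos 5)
Your approach mirrors the paper's at a high level: both view monomials of $F$ as arising from sub-circuits (parse trees) of $\mathcal{C}^*$, track how the inserted $z$-variables decorate the traversed edges, derive the degree bound from ``$k$ leaves times depth $\le t$ plus one root edge,'' and derive uniqueness from the fact that distinct sub-circuits have distinct edge sets. You are also right that the whole argument hinges on each parse tree traversing every edge of $\mathcal{C}^*$ at most once --- this is precisely what makes $\alpha$ multilinear and what makes the edge-set-to-$\alpha$ map injective. The paper's own proof simply asserts this (``$z$-variables are one to one correspondent to the edges in a sub-circuit'') without argument, so you have isolated the real crux more honestly than the paper does.

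However, the fix you sketch for that crux does not work. You claim that the only way a $\times$ gate $g$ could be entered twice in a single parse tree is ``through a common $+$-gate ancestor,'' which the bottom-up duplication of $+$ gates would split apart. That is false. Take $r = a \times b$ as the root, where $a$ and $b$ are distinct $+$ gates each having $g$ among their inputs. Before and after the reconstruction, $a$ and $b$ each have fan-out $1$ (so duplication leaves them untouched), yet a parse tree that at $a$ chooses the $g$-input and at $b$ also chooses the $g$-input enters $g$ twice. The two root-to-$g$ paths diverge at $r$, a $\times$ gate, and never share a $+$-gate ancestor; the duplication step, which acts only on $+$ gates and terminals, simply cannot prevent this reconvergence, because $\times$ gates of fan-out greater than one survive into $\mathcal{C}^*$. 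In such a case the edges of the sub-circuit rooted at $g$ are traversed twice and the corresponding $z$-variables appear squared, breaking multilinearity of $\alpha$. (This is genuinely a soft spot in the lemma as applied to general circuits --- the paper's own one-line assertion has the same problem --- but you cannot patch it with the $+$-gate-ancestor argument as stated; you would need either to also duplicate $\times$ gates of high fan-out, or to weaken what the lemma promises about $\alpha$.)

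A smaller point: you state that ``distinct parse trees produce distinct $\alpha$'s'' follows immediately once multilinearity is established, and that this ``gives both uniqueness claims.'' Be careful: the paper's proof only argues the case $\pi \ne \phi$ explicitly, while the lemma also asserts distinct coefficients for repeated occurrences of the \emph{same} $\pi$. Your edge-set argument does cover that case uniformly (two distinct parse trees of $\mathcal{C}^*$ generating the same $\pi$ still differ on some edge, hence on some $z$-variable), which is cleaner than the paper's presentation --- but again, this uniformity relies on each tree using each edge at most once, which is exactly the unresolved step.
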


\begin{proof}
Recall that, by the reconstruction processes, $\mathcal{ C^*}$ computes exactly the same polynomial $F$.
If $F$ has a monomial of degree $k$,
then let $\mathcal{ T}$ be the sub-circuit of $\mathcal{ C^*}$ that generates the monomial $\pi$,
and $\mathcal{ T'}$ be the corresponding sub-circuit in $\mathcal{ C'}$. By the way by which the new $z$-variables are introduced,
the monomial generated by $\mathcal{ T'}$ is $\alpha\pi$ with $\alpha$ as the product of all the $z$-variables
added to the edges of $\mathcal{ T}$
to yield $\mathcal{ T'}$. Since $\pi$ has degree $k$, $\mathcal{ T}$ has $k$ terminal nodes, corresponding to $k$ paths from
 the root to those terminal nodes. Thus, $\mathcal{ T}$ has at most $tk$ edges.
Note that one additional $z$-variable is added to the output edge of the root gate. This implies that
$\alpha$ is a multilinear monomial of $z$-variables with degree $\le tk+1$.

\begin{center}
\begin{tabular}{cc}
\epsfxsize=1.5in
\hspace*{.0in}{\epsfbox{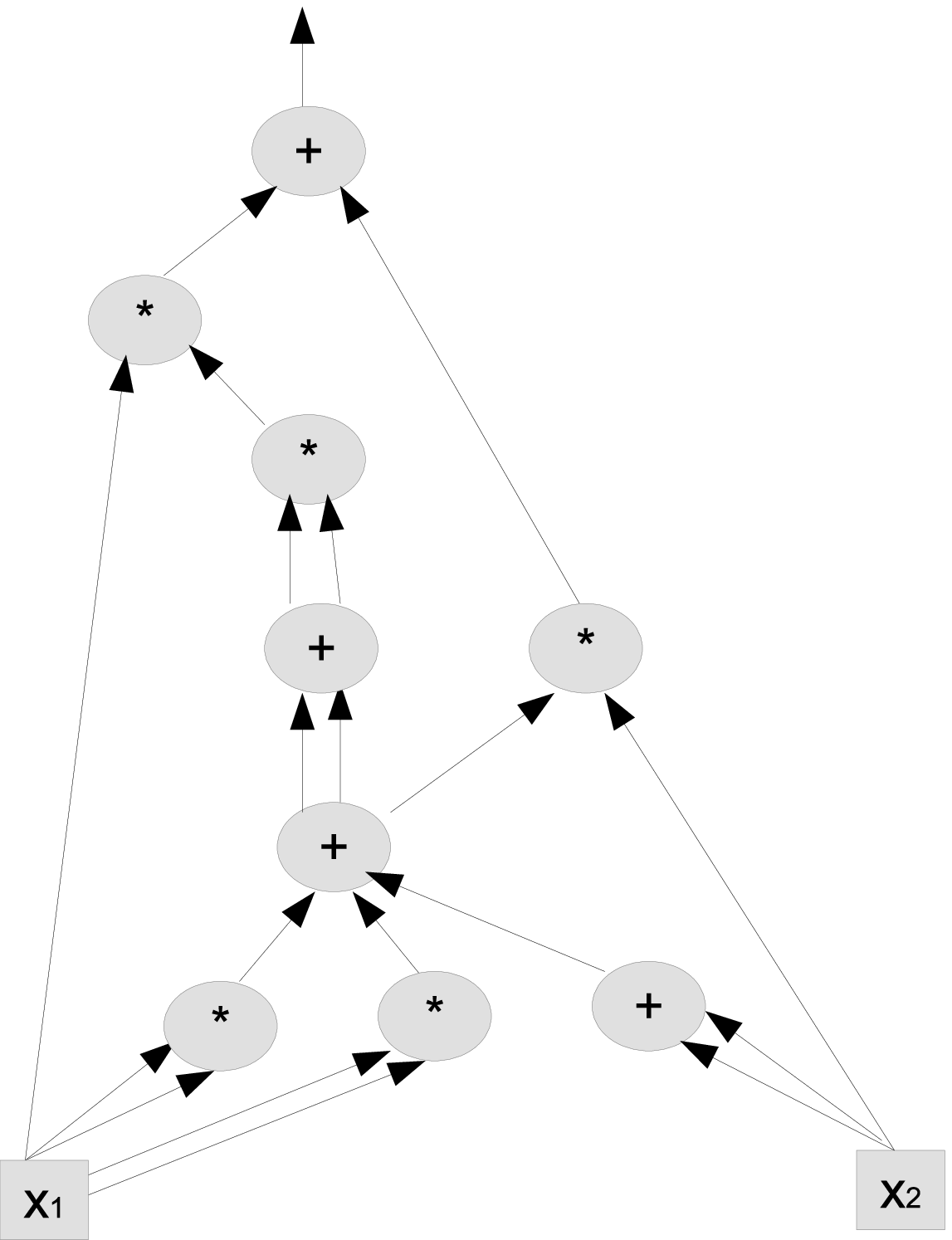}} 
&
\epsfxsize=1.8in
\hspace*{.2in}{\epsfbox{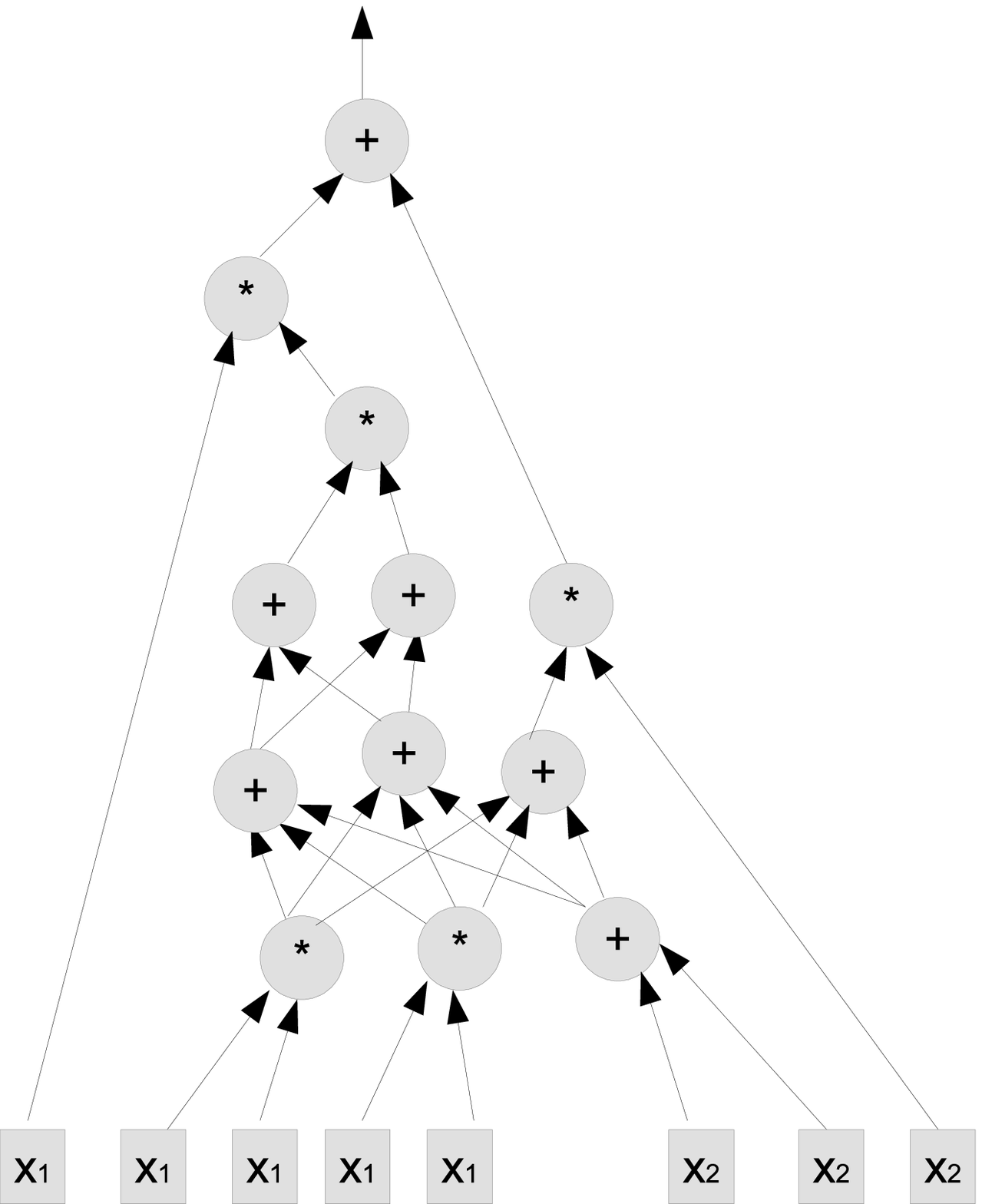}} 
\\
{\bf Fig. 1.} Circuit $\mathcal{ C}$ for $F(x_1,x_2)$
&
\hspace*{.2in}{\bf Fig. 2.} Circuit $\mathcal{ C^*}$ for $F(x_1,x_2)$
\end{tabular}
\end{center}

\begin{center}
\begin{tabular}{c}
\epsfxsize=2.5in
\hspace*{0in}{\epsfbox{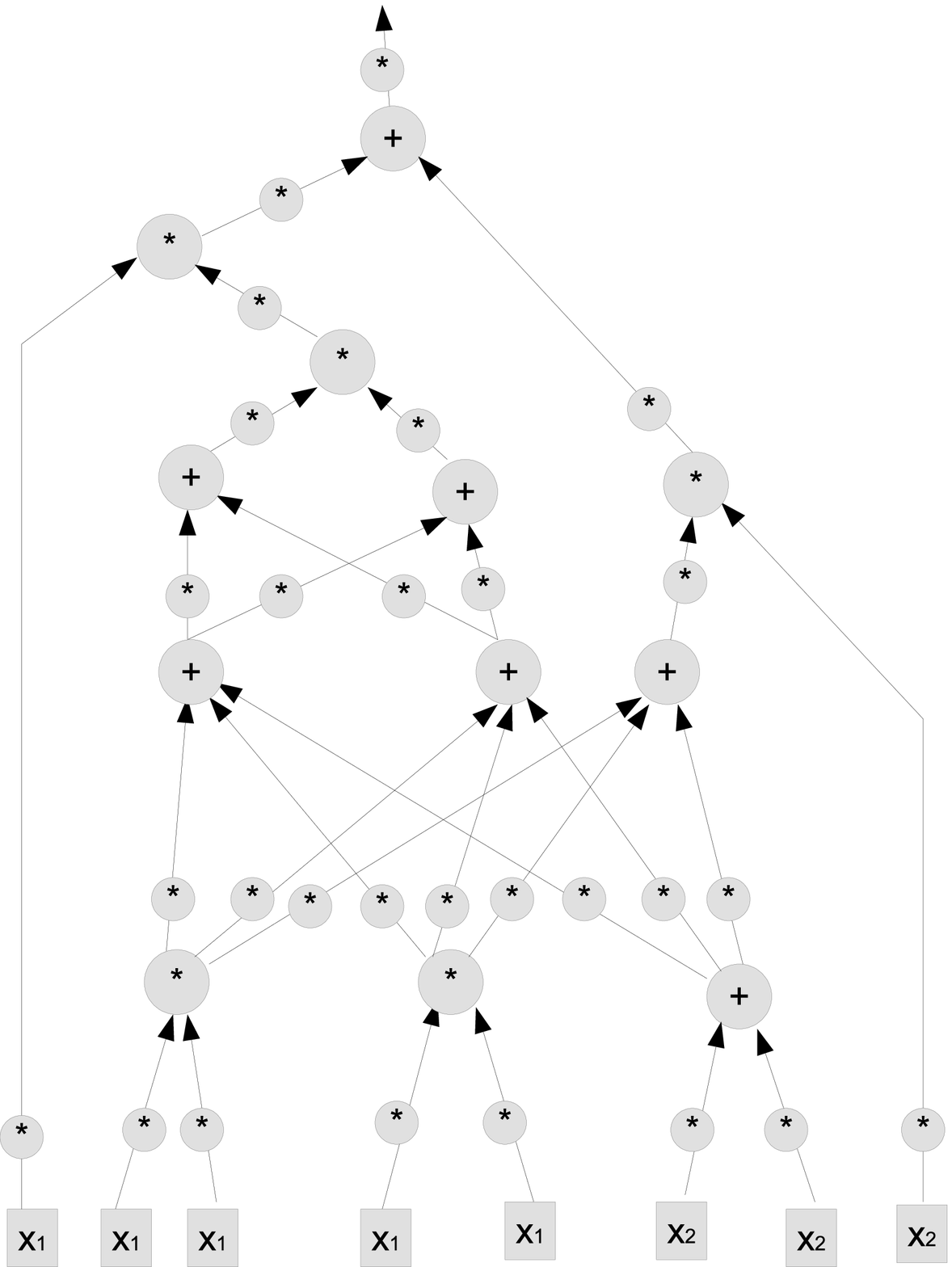}} 
\\
{\bf Fig. 3.} Circuit $\mathcal{ C'}$ for $F(x_1,x_2).$
Due to space limitation, \\
all $z$-variables for the new $\times$ gates
are not shown in the figure.
\end{tabular}
\end{center}

If $F'$ has a monomial $\alpha\pi$ such that $\alpha$ is a product of $z$-variables and $\pi$ is a product of $x$-variables,
then let $\mathcal{ M}'$ be the sub-circuit of $\mathcal{ C'}$ that generates $\alpha\pi$. According to the construction of $\mathcal{ C^*}$
and $\mathcal{ C'}$, removing all the $z$-variables
along with the newly added $\times$ gates from $\mathcal{ M'}$ will result in a sub-circuit $\mathcal{ M}$ of $\mathcal{ C}^*$
that generates $\pi$. Thereby, $\pi$ is a monomial in $F$.

Now, consider that $F'$ has two monomials $\alpha\pi$ and $\beta\phi$ such that, $\pi$ and $\phi$ are products of
$x$-variables and $\alpha$ and $\beta$ are products of $z$-variables. Let $\mathcal{ T}'_1$ and $\mathcal{ T'}_2$
be the sub-circuits in $\mathcal{ C'}$  that generate $\alpha\pi$ and $\beta\phi$, respectively.
Again, according to the construction of $\mathcal{ C^*}$
and $\mathcal{ C'}$, removing all the $z$-variables
along with the newly added $\times$ gates from $\mathcal{ T'}_1$ and $\mathcal{ T'}_2$
will result in two sub-circuits $\mathcal{ T}_1$ and $\mathcal{ T}_2$ of $\mathcal{ C}^*$
that generate $\pi$ and $\phi$, respectively.
When $\pi \not= \phi$, $\mathcal{ T}_1$ and $\mathcal{ T}_2$ are different sub-circuits, this implies that
there is at least an edge $e$ that is in either $\mathcal{ T}_1$ or  $\mathcal{ T}_2$, but not both.
Since a new $\times$ gate is added for $e$ with a new $z$-variable,
there is at least one $z$-variable that is in either $\mathcal{ T'}_1$ or $\mathcal{ T'}_2$, but not both.
Hence, $\alpha$ and $\beta$ do not share the same set of $z$-variables,
because $z$-variables are one to one correspondent to the edges in a sub-circuit.
Hence, $\alpha \not= \beta.$ Also, since the $z$-variables in $\alpha$ correspond to edges in $\mathcal{ T'}_1$,
$\alpha$ is multilinear. Similarly, $\beta$ is also multilinear.

Combining the above analysis completes the proof for the lemma.
\end{proof}

\subsection{Variable Replacements}\label{VR}

Following Subsection \ref{CR}, we continue to address how to further transform
the new polynomial $F'(z_1,z_2,\ldots,z_h,x_1,x_2,\ldots,x_n)$ computed by the circuit
$\mathcal{ C'}$. The method for this part of the transformation is similar to, but different from,
the method proposed by us in \cite{bill12}.

{\bf Variable replacements:}\label{Trans}
Here, we start with the new circuit $\mathcal{ C'}$
that computes $F'(z_1, z_2, \ldots, z_h, x_1, x_2,\ldots, x_n)$.
For each variable $x_i$, we replace it with a "weighted" linear sum  of $q-1$  new $y$-variables
$y_{i1},y_{i2},\ldots,y_{i(q-1)}$.
The replacements work as follows:
For each variable $x_i$, we first
add $q-1$ new terminal nodes that represent  $q-1$ many $y$-variables $y_{i1},y_{i2},\ldots,y_{i(q-1)}$.
Then, for each terminal node $u_j$ representing $x_i$ in $\mathcal{ C'}$, we replace $u_j$ with a $+$ gate.
Later, for each new $+$ gate $g_j$ that is created for $u_j$ of $x_i$, let $g_j$ receive input from $y_{i1},y_{i2},\ldots,y_{i(q-1)}$.
That is, we add an edge from each of such $y$-variables to $g_j$. Finally, for each edge $e_{ij}$ from $y_{ij}$ to $g_j$,
replace $e_{ij}$ by a new $\times$ gate that takes inputs from  $y_{ij}$ and a new $z$-variable $z_{ij}$ and sends the output to $g_j$.

Let $\mathcal{ C''}$ be the circuit resulted from the above transformation, and
$$
G(z_1,\ldots,z_h,y_{11},\ldots,y_{1(q-1)},\ldots,y_{n1},\ldots,y_{n(q-1)})
$$
be the polynomial computed by the circuit $\mathcal{ C''}$.

\begin{example}\label{example-2}
We continue Example \ref{example-1} in Subsection \ref{CR}.
The new circuit $\mathcal{ C''}$ for $F(x_1,x_2)$  is given in Figure 4.
\end{example}

\begin{lemma}\label{rtm-lem2}
Let $F(x_1,x_2,\ldots,x_n)$ be any given polynomial represented by
a circuit $\mathcal{ C}$ and $t$ be the length of the longest path of $\mathcal{ C}$. For any fixed integer $q\ge 2$,
$F$ has a $q$-monomial of $x$-variables with degree $k$, then
$G$ has a unique multilinear monomial $\alpha \pi$ such that $\pi$ is a degree $k$ multilinear monomial of $y$-variables
and $\alpha$ is a multilinear monomial of $z$-variables with degree $\le k(t+1) +1 $. If $F$ has no $q$-monomials, then
$G$ has no multilinear monomials of $y$-variables, i.e., $G$ has no monomials of the format $\beta \phi$ such that
$\beta$ is a monomial of $z$-variables and
 $\phi$ is a multilinear monomial of $y$-variables.
\end{lemma}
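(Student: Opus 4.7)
The plan is to push the variable-replacement construction through Lemma~\ref{rtm-lem1} in both directions. It is convenient to distinguish the \emph{edge} $z$-variables introduced in Subsection~\ref{CR} from the \emph{terminal} $z$-variables introduced by the variable replacement; together they form the $z$-variables of $G$. For the forward direction, I would take a $q$-monomial $\xi = x_{i_1}^{s_1}\cdots x_{i_r}^{s_r}$ of $F$ with $\sum_j s_j=k$ and each $1\le s_j\le q-1$, and apply Lemma~\ref{rtm-lem1} to obtain a monomial $\alpha'\xi$ of $F'$ arising from some sub-circuit $\mathcal{T}'$ of $\mathcal{C}'$ with exactly $k$ terminal nodes, where $\alpha'$ is multilinear in the edge $z$-variables and has degree at most $tk+1$. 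In $\mathcal{C}''$ each such terminal $u$ representing some $x_i$ has been replaced by a $+$-gate whose output is $\sum_{l=1}^{q-1} z^{(u)}_{i,l}\, y_{i,l}$, with all terminal $z$-variables distinct across pairs $(u,l)$. Because $s_j\le q-1$, I can assign the $s_j$ terminals representing $x_{i_j}$ injectively to distinct $y$-indices in $\{1,\ldots,q-1\}$; selecting the corresponding summand at each such $+$-gate yields a multilinear $y$-monomial $\pi$ of degree $k$ with coefficient $\alpha = \alpha' \cdot \prod_{u} z^{(u)}_{i,\sigma(u)}$, where the product ranges over the $k$ terminals of $\mathcal{T}'$. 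Hence $\alpha$ is multilinear and has degree at most $tk+1+k = k(t+1)+1$.

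The main obstacle, which I expect requires the most care, is showing that $\alpha\pi$ survives as a non-zero term in the sum-product expansion of $G$ and is ``unique'' in the sense claimed. I plan to set up a bijection between pairs $(\mathcal{T}',\sigma)$---a generating sub-circuit of $\mathcal{C}'$ together with an injective assignment $\sigma$ of a $y$-index to each of its terminals---and the monomial-producing computation branches in $G$. Distinct $\mathcal{T}'$'s already carry distinct edge-$z$ products by Lemma~\ref{rtm-lem1}, while distinct $\sigma$'s within a fixed $\mathcal{T}'$ pick out disjoint sets of terminal $z$-variables, because each $z^{(u)}_{i,l}$ is indexed bijectively by the pair $(u,l)$. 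Consequently every pair $(\mathcal{T}',\sigma)$ contributes a distinct $z$-$y$ monomial to $G$, no cancellation among these contributions is possible, and each multilinear-in-$y$ monomial of $G$ carries a uniquely determined $z$-coefficient $\alpha$ that is itself multilinear in the $z$-variables.

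For the contrapositive, I would start from an arbitrary monomial $\beta\phi$ of $G$ with $\phi$ multilinear in the $y$-variables and trace it back through the bijection above to a pair $(\mathcal{T}',\sigma)$ that produces it. The sub-circuit $\mathcal{T}'$ generates some $x$-monomial $\xi$ in $F'$; multilinearity of $\phi$ forces $\sigma$ to be injective on the terminals representing each $x_i$, bounding the number of such terminals, and hence the exponent of $x_i$ in $\xi$, by $q-1$. Thus $\xi$ is a $q$-monomial, and the second half of Lemma~\ref{rtm-lem1} transfers $\xi$ back to $F$, so the existence of any multilinear-in-$y$ monomial in $G$ would force $F$ to contain a $q$-monomial, which is the contrapositive of the second assertion.
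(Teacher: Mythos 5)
Your proposal is correct and follows essentially the same route as the paper's proof: both directions go through Lemma~\ref{rtm-lem1} to pass between $F$ and $F'$, then track sub-circuits of $\mathcal{C'}$ and $\mathcal{C''}$, using the fact that each terminal of $x_i$ offers only $q-1$ choices of $y_{ij}$ so that multilinearity in $y$ forces (and is forced by) exponents at most $q-1$. The one place you are slightly more explicit than the paper is the bijection between $(\mathcal{T'},\sigma)$ pairs and monomials of $G$, which cleanly justifies the ``unique coefficient / no cancellation'' claim that the paper's proof asserts but does not spell out in the same detail.
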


\begin{center}
\begin{tabular}{c}
\epsfxsize=3in
\hspace*{0in}{\epsfbox{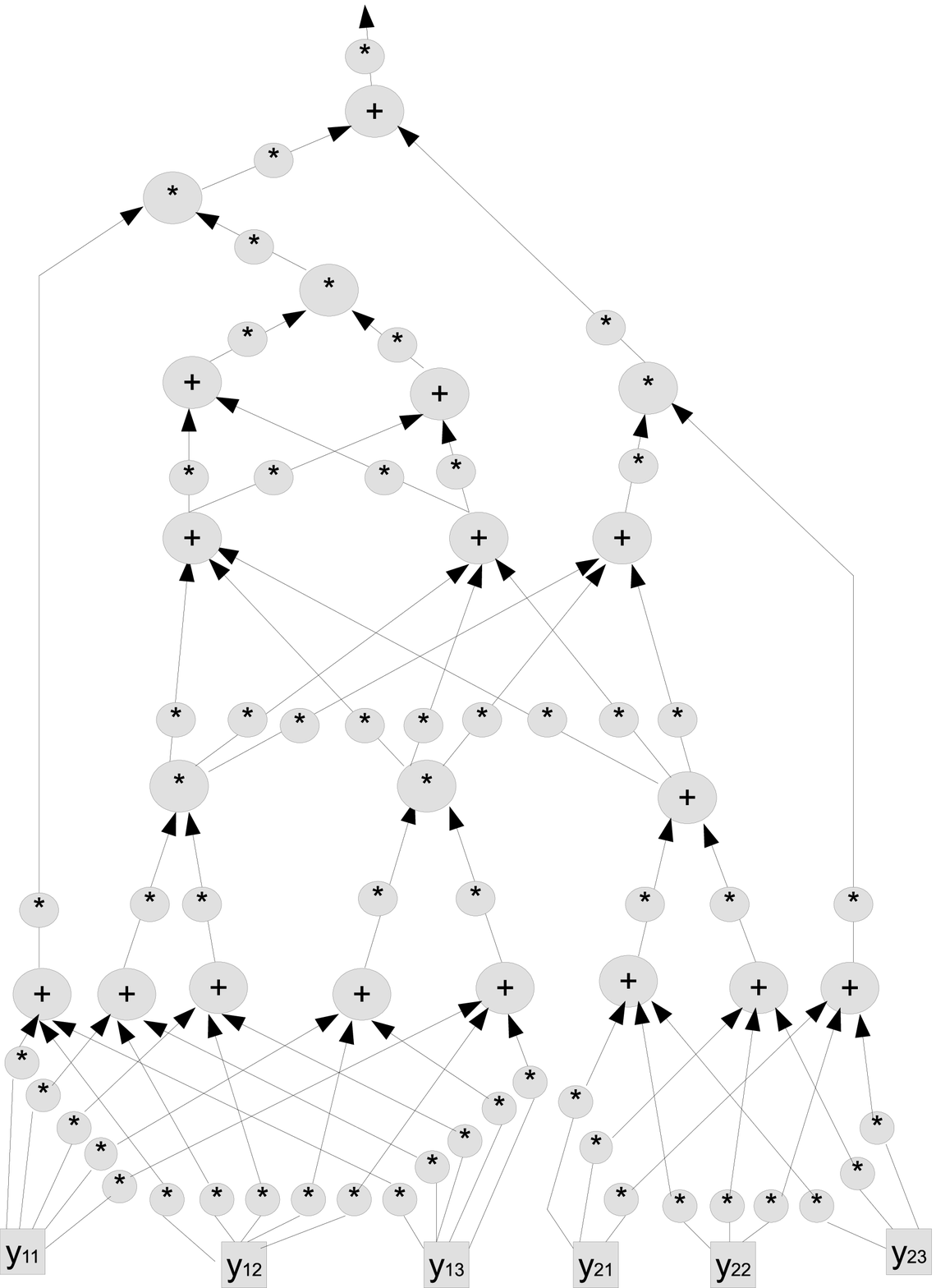}} 
\\
{\bf Fig. 4.} Circuit $\mathcal{ C''}$ for $F(x_1,x_2)$.
Due to space limitation, \\
all $z$-variables for the new $\times$ gates
are not shown in the figure.
\end{tabular}
\end{center}

\begin{proof}
We first show the second part of the lemma, i.e.,
if $F$ has no $q$-monomials, then $G$ has no multilinear monomials of $y$-variables.
Suppose otherwise that $G$ has a multilinear monomial $\phi$ of $y$-variables with a coefficient $\beta$, which is a monomial of
$z$-variables.
Let $\phi = \phi_1\phi_2\cdots \phi_s$ such that $\phi_j$ is the product of all the $y$-variables
in $\phi$ that are used to replace the variable $x_{i_j}$, and let $\mbox{deg}(\phi_j) = d_j$, $1\le j\le  s$.
Consider the sub-circuit $\mathcal{ T}''$ of $\mathcal{ C}''$ that generates $\beta\phi$ when the $x$-variables are replaced by
a "weighted" linear sum of $y$-variables according to the aforementioned variable replacements.
Derive the sub-circuit $\mathcal{ T'}$ in $\mathcal{ C}'$ that corresponds to
$\mathcal{ T}''$ in $\mathcal{ C}''$. Also, derive the sub-circuit $\mathcal{ T}$ in $\mathcal{ C}^*$ that corresponds to
$\mathcal{ T}'$ in $\mathcal{ C}$.
Then, the sub-circuit $\mathcal{ T}$ in $\mathcal{ C}^*$ computes a monomial $\pi = x_{i_1}^{d_i}x_{i_2}^{d_2}\cdots x^{d_s}_{i_s}$ and
$\phi$ is a multilinear monomial in the expansion of the replacement
$$
r(\pi) = \prod^{s}_{j=1 } \left(\prod^{d_j}_{\ell = 1} (z_{j\ell 1}y_{j 1}+z_{j\ell 2}y_{j 2}
+\cdots + z_{j\ell (q-1)}y_{j (q-1)})\right).
$$
which is obtained by the variable replacements described above. If there is one $d_j$ such that
$d_j\ge q$, then let us look at the replacement for $x_{i_j}^{d_j}$, denoted as
$$
r(x_{i_j}^{d_j}) = \prod^{d_j}_{\ell = 1} (z_{j\ell 1}y_{j 1}+z_{j\ell 2}y_{j 2}+\cdots + z_{j\ell (q-1)}y_{j (q-1)}).
$$
Since $d_j\ge q$, by the pigeon hole principle, the expansion of the above $r(x_{i_j}^{d_j})$ has no multilinear monomials.
Thereby, we must have $1\le d_j\le q-1$, $1\le j\le s$. Hence, $\pi$ is a $q$-monomial in $F$, a contradiction to our assumption at
the beginning. Therefore, when $F$ has no $q$-monomials, then $G$ must not have any multilinear monomials of $y$-variables.

We now prove the first part of the lemma.
Suppose $F$ has a $q$-monomial $\pi = x_{i_1}^{s_1}x_{i_2}^{s_2}\cdots x_{i_t}^{s_t}$ with $1\le s_j \le q-1$, $1\le j\le t$.
Let $k = \mbox{deg}(\pi)$.
By Lemma \ref{rtm-lem1}, $F'$ has at least one monomial corresponding to $\pi$. Moreover, each of those monomials
in $F'$ has a format $\alpha \pi$ such that $\alpha$ is a unique product of $z$-variables with
$\mbox{deg}(\alpha) \le tk+1$. Let $\pi' = \alpha \pi$ be one of those monomials.
Consider the sub-circuit $\mathcal{ T'}$ of $\mathcal{ C}'$ that generates $\pi'$. Based on the construction of
$\mathcal{ C}'$, $\mathcal{ T'}$ has $k$ terminal nodes representing $k$ occurrences of all the $x$-variables in $\pi$.
Following the aforementioned variable replacements, each occurrence of those $x$-variables is replaced
by a $+$ gate with inputs from
$q-1$ many $\times$ gates. Moreover, each of such $\times$ gates receives inputs from a $y$-variable and a $z$-variable.
For each $g$ of those $+$ gates, we select one of the $q-1$ many $\times$ gates that are inputs to $g$. Then,
the expanded sub-circuit $\mathcal{ T''}$ of $\mathcal{ T'}$ with all the selected $\times$ gates is a sub-circuit in $\mathcal{ C''}$ that
generates a monomial $\beta\phi$, where $\phi$ is a multilinear monomial of $y$-variables with degree $k$, and
$\beta$ is the product of $\alpha$ with those additional $z$-variables in $\mathcal{ T''}$ but not in
$\mathcal{T}'$, and the degree of
$\beta$ is  $k(t+1) +1$.
\end{proof}

\section{A Faster Randomized Algorithm}

Recently, an $O^*(7.15^k)$ time randomized algorithm has been devised by us in \cite{bill12} for testing $q$-monomials
in any polynomial represented by a tree-like circuit.
We now extend this result to general circuits with a better $O^*(2^k)$ upper bound.

Consider any given polynomial $F(x_1,x_2,\ldots,x_n)$ that is represented by a circuit $\mathcal{ C}$ of size $s(n)$.
Note that the length of the longest path from the root of $\mathcal{ C}$ to any terminal node is no more than $s(n)$.


Let $d = \log_2 (k(s(n)+1)+1) + 1$ and $\mathcal{ F} = \mbox{GF}(2^d)$ be a finite field of $2^d$ many elements.
We consider the group algebra $\mathcal{ F}[Z^k_2]$. Please note that the field $\mathcal{ F} = \mbox{GF}(2^d)$ has characteristic $2$.
This implies that, for any given element $w \in \mathcal{ F}$, adding $w$ for any even number of times yields $0$.
For example, $w + w = 2w = w+w+w+w = 4w = 0.$

The algorithm RTM for testing whether $F(x_1,x_2,\ldots,x_n)$ has a $q$-monomial of degree $k$ is given in the following.

\begin{quote}
{\bf Algorithm RTM} (\underline{R}andomized \underline{T}esting of $q$-\underline{M}onomials):
\begin{description}
\item[1.] As described in Subsections \ref{CR} and \ref{VR},
reconstruct the circuit $\mathcal{ C}$ to obtain $\mathcal{ C}^*$ that computes the same polynomial
$F$ and then introduce new $z$-variables to $\mathcal{ C}^*$
to obtain the new circuit $\mathcal{ C'}$ that computes $F'(z_1,z_2,\ldots,z_h,x_1,x_2,\ldots, x_n)$.
Finally, obtain a circuit $\mathcal{ C''}$ by variable replacements so that $F'$ is transformed to
$$
G(z_1,\ldots,z_h,y_{11},\ldots,y_{1(q-1)},\ldots,y_{n1},\ldots,y_{n(q-1)}).
$$
\item[2.] Select uniform random vectors $\vec{v}_{ij}
\in Z^k_2-\{\vec{v}_0\}$, and replace the variable $y_{ij}$ with
$(\vec{v}_{ij} + \vec{v}_0)$, $1\le i \le n$ and $1\le j\le q-1$.

\item[3.] Use $\mathcal{ C''}$ to calculate
\begin{eqnarray}\label{rtm-exp2}
G' &=& G(z_1,\ldots,z_h,(\vec{v}_{11}+\vec{v}_0),\ldots,(\vec{v}_{1(q-1)}+\vec{v}_0),
\ldots,\nonumber \\
& & ~~~~(\vec{v}_{n1}+\vec{v}_0),\ldots,(\vec{v}_{n(q-1)}+\vec{v}_0)) \nonumber \\
&= &  \sum_{j=1}^{2^k} f_j(z_1,\ldots,z_h) \cdot \vec{v}_j,
\end{eqnarray}
where each $f_j$ is a polynomial of degree $\le k(s(n)+1)+1$ (see Lemma \ref{rtm-lem2}) over the finite
field $\mathcal{ F}=\mbox{GF}(2^d)$, and $\vec{v}_j$ with $1\le j\le 2^k$ are the $2^k$ distinct vectors in
$Z^k_2$.

\item[4.] Perform polynomial identity testing with the
Schwartz-Zippel algorithm \cite{motwani95} for every
$f_{j}$ over $\mathcal{ F}$.
Return {\em "yes"} if one of those polynomials is not
identical to zero. Otherwise,  return {\em "no"}.
\end{description}
\end{quote}

It should be pointed out that the actual implementation of Step 4 would be running the
Schwartz-Zippel algorithm concurrently for all $f_j$, $1\le j\le 2^k$,
utilizing the circuit $\mathcal{ C''}$. If one of those polynomials is not
identical to zero, then the output of $G'$ as computed by circuit $\mathcal{ C''}$ is not zero.

The group algebra technique established by Koutis \cite{koutis08} assures
the following two properties:

\begin{lemma}\label{rtm-lem3}
(\cite{koutis08})~  Replacing all the variables $y_{ij}$
in $G$ with group algebraic elements $\vec{v}_{ij}+\vec{v}_0$ will make all
monomials $\alpha \pi$ in $G'$ to become zero, if $\pi$ is non-multilinear with respect to $y$-variables. Here,
$\alpha$ is a product of $z$-variables.
\end{lemma}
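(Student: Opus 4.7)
The plan is to prove Lemma \ref{rtm-lem3} by reducing everything to a single algebraic identity in the group algebra $\mathcal{F}[Z_2^k]$, namely that $(\vec{v}+\vec{v}_0)^2 = \mathbf{0}$ for every $\vec{v}\in Z_2^k$, and then using multiplicativity to propagate the vanishing to any monomial in which some $y$-variable appears with multiplicity at least two.

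First I would record the key identity. Fix any $\vec{v}\in Z_2^k$ and compute, in $\mathcal{F}[Z_2^k]$,
\begin{equation*}
(\vec{v}+\vec{v}_0)\cdot(\vec{v}+\vec{v}_0) \;=\; \vec{v}\cdot\vec{v} \;+\; \vec{v}\cdot\vec{v}_0 \;+\; \vec{v}_0\cdot\vec{v} \;+\; \vec{v}_0\cdot\vec{v}_0.
\end{equation*}
By the definition of the group operation on $Z_2^k$, the component-wise addition modulo $2$ gives $\vec{v}\cdot\vec{v}=\vec{v}_0$ and $\vec{v}_0\cdot\vec{v}_0=\vec{v}_0$, while $\vec{v}\cdot\vec{v}_0=\vec{v}_0\cdot\vec{v}=\vec{v}$. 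The expression therefore collapses to $\vec{v}_0 + 2\vec{v} + \vec{v}_0 = 2\vec{v}_0 + 2\vec{v}$. Since $\mathcal{F}=\mbox{GF}(2^d)$ has characteristic $2$, the scalar $2$ equals $0$ in $\mathcal{F}$, and hence $(\vec{v}+\vec{v}_0)^2=\mathbf{0}$.

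Next I would leverage this identity monomial by monomial. Consider any monomial $\alpha\pi$ appearing in the sum-product expansion of $G$, where $\alpha$ is a product of $z$-variables and $\pi$ is a product of $y$-variables. Write $\pi=\prod_{i,j} y_{ij}^{s_{ij}}$. After the substitution $y_{ij}\mapsto \vec{v}_{ij}+\vec{v}_0$, the $\pi$-part becomes the group-algebra product $\prod_{i,j}(\vec{v}_{ij}+\vec{v}_0)^{s_{ij}}$, and the $z$-prefactor $\alpha$ simply passes through as a scalar-like factor (it commutes with the group algebra elements since $z$-variables are untouched). If $\pi$ is non-multilinear with respect to $y$-variables, then some exponent $s_{ij}$ is at least $2$, so the corresponding factor contains $(\vec{v}_{ij}+\vec{v}_0)^2$. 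By the identity established above this factor equals $\mathbf{0}$, and by the absorbing property of $\mathbf{0}$ in the group algebra the entire product vanishes, yielding $\alpha\pi = \mathbf{0}$ in $G'$.

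I do not foresee a serious technical obstacle; the only subtlety worth being explicit about is the characteristic-2 cancellation, since without it the cross terms $\vec{v}\cdot\vec{v}_0+\vec{v}_0\cdot\vec{v}=2\vec{v}$ would survive and the squaring identity would fail. I would therefore emphasize at the start that the choice $\mathcal{F}=\mbox{GF}(2^d)$ is precisely what makes this argument work, and that all other steps are purely formal manipulations inside the group algebra $\mathcal{F}[Z_2^k]$.
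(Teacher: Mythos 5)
Your proof is correct and follows essentially the same route as the paper: establish the key identity $(\vec{v}+\vec{v}_0)^2 = \mathbf{0}$ in $\mathcal{F}[Z_2^k]$ via the characteristic-$2$ cancellation, and conclude that any monomial with a repeated $y$-variable vanishes. The only difference is that you spell out the final ``propagation to each monomial'' step explicitly, whereas the paper leaves it implicit after deriving the squaring identity.
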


\begin{proof}
Recall that $\mathcal{ F}$ has characteristic $2$.
For any $\vec{v} \in Z^k_2$, in the group algebra $\mathcal{ F}[Z^k_2]$,
\begin{eqnarray}\label{rt-1}
(\vec{v}+\vec{v}_0)^2 &=& \vec{v}\cdot\vec{v} + 2\cdot\vec{v}\cdot\vec{v}_0 + \vec{v}_0\cdot\vec{v}_0 \nonumber \\
 &=&\vec{v}_0+ 2\cdot\vec{v} + \vec{v}_0 \nonumber \\
 &=& 2\cdot \vec{v}_0 + 2\cdot\vec{v} = {\bf 0}.
\end{eqnarray}
Thus, the lemma follows directly from expression (\ref{rt-1}).
\end{proof}

\begin{lemma}\label{rtm-lem4}
(\cite{koutis08})~ Replacing all the variables $y_{ij}$ in $G$
with group algebraic elements $\vec{v}_{ij}+\vec{v}_0$ will make any monomial $\alpha \pi$
to become zero,  if and only if  the vectors $\vec{v}_{ij}$  are linearly dependent in the vector space $Z^k_2$.
Here, $\pi$ is a multilinear monomial of $y$-variables and $\alpha$ is a product of $z$-variables,
Moreover, when $\pi$ becomes non-zero after the replacements,
it will become the sum of all the vectors in the linear space spanned by those vectors.
\end{lemma}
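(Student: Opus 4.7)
The plan is to analyze the expansion of the product $\pi \mapsto \prod_{\ell=1}^{k} (\vec{v}_{i_\ell j_\ell} + \vec{v}_0)$ after the substitution, where $\pi = y_{i_1 j_1} \cdots y_{i_k j_k}$ is a multilinear monomial of $y$-variables of degree $k$. Since $\alpha$ is only a product of $z$-variables, it is a scalar as far as the group-algebra calculation is concerned, so the question is solely whether the above product in $\mathcal{F}[Z_2^k]$ vanishes.

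First I would distribute the product: using the definition of multiplication in $\mathcal{F}[Z_2^k]$, it expands to the sum over all $2^k$ subsets $S \subseteq \{1,\ldots,k\}$ of the group element $\prod_{\ell \in S}\vec{v}_{i_\ell j_\ell} \cdot \prod_{\ell \notin S}\vec{v}_0$. Since $\vec{v}_0$ is the identity and the group operation on $Z_2^k$ is componentwise addition modulo $2$, this group element is exactly $\sum_{\ell \in S}\vec{v}_{i_\ell j_\ell}$. Hence the product equals
\begin{equation*}
  \sum_{S \subseteq \{1,\ldots,k\}} \Bigl(\textstyle\sum_{\ell \in S}\vec{v}_{i_\ell j_\ell}\Bigr),
\end{equation*}
where the outer sum is in the group algebra and the inner sum is in $Z_2^k$.

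Next I would regroup this sum by collecting equal group elements. Let $r$ be the rank over $Z_2$ of the multiset $\{\vec{v}_{i_1 j_1},\ldots,\vec{v}_{i_k j_k}\}$, let $W \subseteq Z_2^k$ be their linear span, and let $K$ be the kernel of the linear map $Z_2^k \to W$ sending the indicator of $S$ to $\sum_{\ell \in S}\vec{v}_{i_\ell j_\ell}$. Then $|K| = 2^{k-r}$, and every $\vec{w}\in W$ is realized by exactly $2^{k-r}$ subsets $S$. Hence the sum equals $2^{k-r}\sum_{\vec{w}\in W}\vec{w}$ as a formal sum in $\mathcal{F}[Z_2^k]$. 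Because $\mathcal{F}=\mathrm{GF}(2^d)$ has characteristic $2$, the coefficient $2^{k-r}$ is $0$ whenever $k-r \ge 1$ and is $1$ when $k=r$.

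The two halves of the lemma now follow: the product vanishes exactly when $k-r \ge 1$, i.e., when $\vec{v}_{i_1 j_1},\ldots,\vec{v}_{i_k j_k}$ are linearly dependent; and when they are independent, the product equals $\sum_{\vec{w}\in W}\vec{w}$, which is precisely the sum of all vectors in their linear span. The main technical point is the bijection argument giving $|K|=2^{k-r}$, but this is just a standard rank--nullity statement for the subset-sum map, so the proof is essentially a direct expansion followed by a counting step; no subtle obstacle arises beyond keeping track of characteristic $2$.
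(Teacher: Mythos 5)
Your proof is correct, but it takes a genuinely different route from the paper. The paper argues by cases: for linearly dependent $V$ it first isolates a dependent subset $T$ with $\prod_{\vec v\in T}\vec v=\vec v_0$, observes that for every $S\subseteq T$ the complementary subset $T-S$ produces the same group element, and cancels the two in characteristic $2$ via the involution $S\leftrightarrow T-S$; the vanishing of $\prod_{\vec v\in T}(\vec v+\vec v_0)$ then propagates to the full product over $V$. For the independent case it separately shows all $2^{|V|}$ subset sums are distinct, so the expansion is the sum of all of $\mathrm{span}(V)$. You instead treat both cases uniformly: after expanding, you apply the rank--nullity theorem to the subset-sum map $Z_2^k\to W$, deduce that every element of $W$ occurs with multiplicity exactly $2^{k-r}$, and read off that this multiplicity is $0$ in $\mathrm{GF}(2^d)$ precisely when $r<k$ and $1$ when $r=k$. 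Your version is cleaner and more informative (it gives the exact multiplicity $2^{k-r}$ of every term, not just a zero/nonzero dichotomy), whereas the paper's pairing argument is more elementary and self-contained, avoiding any explicit linear-algebra machinery. Both establish the same conclusion and there is no gap in your argument.
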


\begin{proof}
The analysis below gives a proof for this lemma.
Suppose $V$  is a set of  linearly dependent vectors
in $Z^k_2$. Then, there exists a nonempty subset $T \subseteq V$ such that $\prod_{\vec{v}\in T} \vec{v}= \vec{v}_0$.
For any $S\subseteq T$, since
$\prod_{\vec{v}\in T} \vec{v} =
(\prod_{\vec{v}\in S} \vec{v}) \cdot (\prod_{\vec{v}\in T-S} \vec{v})$, we have
$\prod_{\vec{v}\in S} \vec{v} = \prod_{\vec{v}\in T-S} \vec{v}$.
Thereby, we have
\begin{eqnarray}
\prod_{\vec{v}\in T }(\vec{v} + \vec{v}_0)
&=& \sum_{S\subseteq T} (\prod_{\vec{v}\in S}\vec{v}) = {\bf 0}, \nonumber
\end{eqnarray}
since every $\prod_{\vec{v}\in S}\vec{v}$ is paired by the same
$\prod_{\vec{v}\in T-S}\vec{v}$ in the sum above and the addition of the pair
is annihilated because $\mathcal{ F} $ has characteristic $2$. Therefore,
\begin{eqnarray}
\prod_{\vec{v}\in V}(\vec{v} + \vec{v}_0)
&=& \left(~\prod_{\vec{v}\in T} (\vec{v}+ \vec{v}_0)\right) \cdot \left(~\prod_{\vec{v}\in V-T}(\vec{v}+ \vec{v}_0)\right)  \nonumber \\
& =&  0 \cdot \left(~\prod_{\vec{v}\in V-T}(\vec{v}+ \vec{v}_0)\right) = {\bf 0}. \nonumber
\end{eqnarray}

Now consider that vectors in $V$ are linearly independent.
For any two distinct subsets $S, T\subseteq V$, we must have
$\prod_{\vec{v}\in T} \vec{v} \not=\prod_{\vec{v}\in S} \vec{v}$, because otherwise
vectors in $S \cup T - (S \cap T)$ are linearly dependent, implying that vectors in $V$ are linearly dependent.
Therefore,
\begin{eqnarray}
\prod_{\vec{v}\in V}(\vec{v} + \vec{v}_0)
&=& \sum_{T\subseteq V}(\prod_{\vec{v}\in T} \vec{v}) \nonumber
\end{eqnarray}
is the sum of all the $2^{|V|}$ distinct vectors spanned by $V$.
\end{proof}

\begin{theorem}\label{thm-rtm}
Let $q>2$ be any fixed integer and $F(x_1,x_2,\ldots,x_n)$ be an $n$-variate polynomial
represented by a circuit $\mathcal{ C}$ of
size $s(n)$. Then, the randomized algorithm RTM can decide whether $F$ has a $q$-monomial
of degree $k$ in its sum-product expansion in time $O^*(2^ks^6(n))$.
\end{theorem}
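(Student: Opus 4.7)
The plan is to establish correctness by chaining Lemma \ref{rtm-lem2}, Lemma \ref{rtm-lem3}, and Lemma \ref{rtm-lem4}, and then to bound the running time by counting operations in the group algebra $\mathcal{F}[Z_2^k]$.

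First I would analyze the ``no'' side. Suppose $F$ has no $q$-monomial of degree $k$. By Lemma \ref{rtm-lem2}, every monomial of $G$ has the form $\beta\phi$ where $\phi$ is a non-multilinear product of $y$-variables, so some $y_{ij}$ appears to a power at least $2$. After the substitution $y_{ij}\mapsto \vec{v}_{ij}+\vec{v}_0$ in Step 2, Lemma \ref{rtm-lem3} (applied factor by factor) annihilates every such monomial, so $G'=\mathbf{0}$ and each $f_j\equiv 0$. The algorithm returns ``no'' deterministically. For the ``yes'' side, Lemma \ref{rtm-lem2} gives a unique multilinear monomial $\alpha\pi$ of $G$ with $\pi$ of degree exactly $k$ in $y$-variables and $\deg(\alpha)\le k(t+1)+1\le k(s(n)+1)+1$. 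After Step 2, Lemma \ref{rtm-lem3} kills all non-multilinear $y$-contributions, and Lemma \ref{rtm-lem4} states that a surviving multilinear $y$-monomial $\alpha\pi$ contributes a nonzero element of $\mathcal{F}[Z_2^k]$ precisely when the $k$ associated random vectors are linearly independent.

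Next I would bound the probability of producing a nonzero $f_j$. The probability that $k$ uniformly chosen vectors from $Z_2^k\setminus\{\vec{v}_0\}$ are linearly independent is
\[
\prod_{i=0}^{k-1}\frac{2^k-2^i}{2^k-1}\;\ge\;\prod_{i\ge 1}\bigl(1-2^{-i}\bigr)\;>\;0.288,
\]
a positive constant. Conditioned on this event, at least one $f_j$ is a nonzero polynomial in the $z$-variables of degree at most $k(s(n)+1)+1$. Because $|\mathcal{F}|=2^d\ge 2(k(s(n)+1)+1)$, a single Schwartz--Zippel evaluation at a random point in $\mathcal{F}^h$ detects a nonzero $f_j$ with probability at least $1/2$. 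Standard amplification by $O(1)$ independent repetitions drives the total error below any desired constant, giving a correct randomized decision procedure.

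For the time bound I would analyze the cost of one execution of Steps 1--4. The reconstruction of Subsections \ref{CR} and \ref{VR} blows the circuit up by at most a polynomial factor; in particular each $+$ gate is duplicated once per outgoing edge and then each edge triggers the insertion of an extra $\times$ gate with a fresh $z$-variable, so $|\mathcal{C}''|=O(s^2(n))$ after variable replacements, and this preprocessing is easily done in $O(s^2(n))$ time. In Step 3 we propagate values through $\mathcal{C}''$ where every wire carries an element of $\mathcal{F}[Z_2^k]$ represented as a vector of $2^k$ coefficients in $\mathcal{F}=\mathrm{GF}(2^d)$ with $d=O(\log(ks(n)))$. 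Additions at $+$ gates cost $O(2^k)$ field operations; each $\times$ gate has fan-in $2$, and since one factor is often a single group-algebra vector (after the $z$-variables are instantiated and absorbed) one multiplication costs $O(2^k)$ field operations, and in the worst case a generic group-algebra multiplication still costs $O(2^{k}\cdot\mathrm{poly}(k,s(n)))$. Multiplying these contributions over the $O(s^2(n))$ gates and accounting for the $O(d^2)=\mathrm{poly}(\log(ks(n)))$ cost of each $\mathrm{GF}(2^d)$ operation yields a per-run cost of $O^*(2^k\cdot s^c(n))$ for a small constant $c\le 6$. Step 4 reduces to evaluating $G'$ at $O(1)$ random $z$-assignments and inspecting its $2^k$ coefficients, whose combined cost is dominated by the evaluation cost. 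The main obstacle---and the one I would spend the most care on---is tracking the circuit-size blow-up through the reconstruction plus variable-replacement pipeline and the group-algebra arithmetic cost tightly enough to justify the quoted $s^6(n)$ bound; the probabilistic and algebraic parts are essentially direct consequences of Lemmas \ref{rtm-lem2}--\ref{rtm-lem4} together with Schwartz--Zippel.
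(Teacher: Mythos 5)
Your proposal follows essentially the same route as the paper: transform via $\mathcal{C}^*\to\mathcal{C}'\to\mathcal{C}''$, apply Lemma \ref{rtm-lem2} to split the ``yes'' and ``no'' cases, use Lemma \ref{rtm-lem3} to kill non-multilinear $y$-monomials, use Lemma \ref{rtm-lem4} plus the linear-independence probability $>0.28$ (you derive it directly; the paper cites \cite{blum95}) for the survival probability, and finish with Schwartz--Zippel over $\mathrm{GF}(2^d)$. The only genuinely fresh touch is the self-contained computation of $\prod_{i\ge 1}(1-2^{-i})>0.288$, which is cleaner than an external citation.

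Two points where the paper is more careful than your sketch. First, after the group-algebra substitution, \emph{several} monomials $\beta\phi$ can survive (one per $q$-monomial of $F$, and potentially several per $q$-monomial since $F'$ may produce multiple parse trees for the same $x$-monomial, and each may spawn up to $(q-1)^k$ distinct $y$-choices); the conclusion that some $f_j\not\equiv 0$ is not automatic but rests on the observation, drawn from Lemmas \ref{rtm-lem1}--\ref{rtm-lem2}, that each surviving monomial carries a \emph{distinct} multilinear $z$-monomial $\beta$, so the sum $\sum_{\beta\phi\in\mathcal{S}}\beta$ cannot collapse to zero. This ``no zero-sum'' argument is the entire reason the $z$-variables exist, and your appeal to ``a unique multilinear monomial $\alpha\pi$'' elides it; taken literally that phrase would say there is only one survivor, which is not what the lemma actually guarantees. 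Second, on the cost accounting, you estimate $|\mathcal{C}''|=O(s^2(n))$ while the paper bounds $|\mathcal{C}''|\le s^3(n)$ and charges $n\cdot s^6(n)$ arithmetic operations in $\mathcal{F}[Z_2^k]$; since the original $+$ gates have unbounded fan-in, the edge count of $\mathcal{C}$ can already be $\Theta(s^2(n))$, and the $+$-gate duplication then acts on that edge set, so $O(s^2(n))$ is likely an underestimate. You flag this yourself as the main uncertainty, and indeed it is the only place where the numeric constant $6$ in $s^6(n)$ must be earned; everything else in your plan matches the paper's proof.
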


Since we are often interested in circuits with polynomial sizes in $n$, the time complexity
of algorithm RTM is $O^*(2^k)$ for those circuits.

\begin{proof}
From the introduction of the new $z$-variables to the circuit $\mathcal{ C'}$,
it is easy to see that every monomial in $F'$ has the format $\alpha \pi$, where $\pi$ is a product of
$x$-variables and $\alpha$ is a product of $z$-variables. Since only $x$-variables are replaced by their respective "weighted" linear sums of
new $y$-variables as specified in Subsection \ref{VR}, monomials in $G$ have the format
$\beta \phi$, where $\phi$ is a product of $y$-variables and $\beta$ is a product of $z$-variables.

Suppose that $F$ has no $q$-monomials. By Lemma \ref{rtm-lem2}, $G$ has no monomials $\beta\phi$ such
that $\phi$ is multilinear with respect to $y$-variables. Moreover, by Lemma \ref{rtm-lem3},
replacing $y$-variables by group algebraic elements at Step 2 will make $\phi$ in every monomial $\beta\phi$ in $G$ to become zero.
Hence, the group algebraic replacements will make  $G$ to become zero and so the algorithm RTM will return {\em "no"}.

Assume that $F$ has a $q$-monomial of degree $k$. By Lemma \ref{rtm-lem2},
$G$ has a monomial $\beta \phi$ such that
$\phi$ is a multilinear monomial of degree $k$ with respect to $y$ variables and
$\beta$ is a multilinear monomial of degree $\le k(s(n)+1)+1$ with respect to $z$-variables.
It follows from a lemma in \cite{blum95} (see also, \cite{bill12}) , that
a list of uniform random vectors from $Z^{k}_2$ will
be linearly independent with probability at least $0.28$.
By Lemma \ref{rtm-lem4}, with probability at least $0.28$, the multilinear monomial
$\phi$ will not be annihilated by the group algebraic replacements at Step 2.
Precisely, with probability at least $0.28$, $\beta\phi$ will become
\begin{eqnarray}\label{rtm-exp7}
\lambda(\beta\phi) & = & \sum^{2^k}_{i=1} \beta \vec{v}_i,
\end{eqnarray}
where $\vec{v}_i$ are distinct vectors in $Z^k_2$.

Let $\mathcal{ S}$ be the set of all those multilinear monomials $\beta\phi$ that survive
the group algebraic replacements for $y$-variables in $G$. Then,
\begin{eqnarray}\label{rtm-exp8}
G'&=&G(z_1,\ldots,z_h,(\vec{v}_{11}+\vec{v}_0),\ldots,(\vec{v}_{1(q-1)}+\vec{v}_0),\ldots,\nonumber \\
&&~~~~~~~~~~~(\vec{v}_{n1}+\vec{v}_0),\ldots,(\vec{v}_{n(q-1)}+\vec{v}_0)) \nonumber \\
&= & \sum_{\beta\phi \in \mathcal{ S}} \lambda(\beta\phi)  \nonumber \\
&= &\sum_{\beta\phi\in \mathcal{ S}} \left(\sum^{2^k}_{i=1}\beta \vec{v}_i\right) \nonumber \\
&= &\sum_{j=1}^{2^k} \left(\sum_{\beta\phi\in \mathcal{ S}} \beta \right) \vec{v}_j
\end{eqnarray}
Let
\begin{eqnarray}
f_j(z_1,\ldots,z_h) & = & \sum_{\beta\phi\in \mathcal{ S}} \beta. \nonumber
\end{eqnarray}
By Lemmas \ref{rtm-lem2} and \ref{rtm-lem3}, the degree of $\beta$ is at most $k(s(n)+1)+1$.
Hence, the coefficient polynomial $f_j$ with respect to $\vec{v}_j$ in $G'$ after the group algebraic replacements
has a degree $\le k(s(n)+1)+1$. Also, by Lemma \ref{rtm-lem2},
$\beta$ is unique with respect to every $\phi$ for each monomial $\beta\phi$ in $G$.
Thus, the possibility of a {\rm "zero-sum"} of coefficients from different
surviving monomials is completely avoided during the computation for
$f_j$. Therefore, conditioned on that $\mathcal{ S}$ is not empty,
$G'$ must not be identical to zero, i.e., there exists at least one
$f_j$ that is not identical to zero. At Step 4, we use the
randomized algorithm by Schwartz-Zippel \cite{motwani95}
to test whether $f_j$ is identical to zero. Since the degree of each $f_j$ is
at most $k(s(n)+1)+1$, it is known that
this testing can be done with probability at least
$1 - \frac{\mbox{deg}(f_j)}{|\mathcal{ F}|} \ge \frac{1}{2}$
in time polynomially in $s(n)$ and $\log_2 |\mathcal{F}| = \log_2(k(s(n)+1)+1) + 1 $. Since $\mathcal{ S}$ is
not empty with probability at least $0.28$, the success probability
of testing whether $G$ has a degree $k$ multilinear monomial of $y$-variables is at
least $0.28 \times \frac{1}{2} > \frac{1}{8}$.

Finally, we address the issues of how to calculate $G'$ and the
time needed to do so. Naturally, every element in the group
algebra $\mathcal{ F}[Z^k_2]$ can be represented by a vector in
$Z^{2^k}_2$. Adding two elements in $\mathcal{ F}[Z^k_2]$ is equivalent to
adding the two corresponding vectors in $Z_2^{2^k}$, and the
latter can be done in $O(2^k\log_2|\mathcal{F}|)$ time via component-wise sum.
In addition, multiplying two elements in $\mathcal{F}[Z^k_2]$ is
equivalent to multiplying the two corresponding vectors in
$Z_2^{2^k}$, and the latter can be done in $O(k2^{k+1}\log_2|\mathcal{F}|)$ with
the help of a similar Fast Fourier Transform style algorithm as in
Williams \cite{williams09}. By the circuit reconstruction and variable replacements
in Subsections \ref{CR} and \ref{VR}, the size of the circuit $\mathcal{ C''}$ is at most
$s^3(n)$. Calculating $G'$ by the circuit $\mathcal{ C''}$ consists of $n * s^6(n)$
arithmetic operations of either adding or multiplying two elements
in $\mathcal{ F}[Z^k_2]$ based on the circuit $\mathcal{ C''}$. Hence, the total
time needed is $O(n*s^6(n) k 2^{k+1}\log_2|\mathcal{F}|)$. At Step 4, we run the
Schwartz-Zippel algorithm on $G'$ to
simultaneously test whether there is one $f_j$ such that $f_j$ is
not identical to zero.
Recall that $\log_2|\mathcal{F}| = log_2 (k(s(n)+1)+1) +1$. The total time for the entire
algorithm is $O^*(2^k s^6(n))$.
\end{proof}

\section{A Deterministic Algorithm via Derandomization}

We shall devise a deterministic algorithm for testing
$q$-monomials in a multivariate polynomial represented by a tree-like circuit.
Our approach is to derandomize Steps 2 and 4 in algorithm RTM
respectively with the help of two advanced techniques of
perfect hashing by Chen {\em et al.} \cite{jianer-chen07} (see also Naor {\em
et al.} \cite{naor95}) and noncommunicative multivariate
polynomial identity testing by Raz and Shpilka \cite{raz05}.
Our approach follows the work in \cite{chen11b,chen12b}.
However, we are no longer require $q$ to be a prime
and also obtain a better time bound.

\begin{definition}\label{def-hash}
(See, Chen et al. \cite{jianer-chen07}) Let $n$ and $k$ be two integers such that $1\le k\le n$. Let
$\mathcal{ A} =\{1, 2, \ldots, n\}$ and $\mathcal{ K} = \{1, 2, \ldots,
k\}$. A $k$-coloring of the set $\mathcal{ A}$ is a function from
$\mathcal{ A}$ to $\mathcal{ K}$. A collection $\mathcal{ F}$ of
$k$-colorings of $\mathcal{ A}$ is a $(n,k)$-family of {\em perfect
hashing functions} if for any subset $W$ of $k$ elements in $\mathcal{
A}$, there is a $k$-coloring $h \in \mathcal{ F}$ that is injective
from $W$ to $\mathcal{ K}$, i.e., for any $x, y \in W$, $h(x)$ and
$h(y)$ are distinct elements in $\mathcal{ K}$.
\end{definition}

Like in the design of algorithm RTM, we assume, without loss of generality, that
when a polynomial has  $q$-monomials in its sum-product expansion,
one of the $q$-monomials has exactly a  degree of $k$ and
all the rest of those will have degrees at least $k$.

\begin{theorem}\label{thm-dtm}
Let $q\ge 2$ be fixed integer. Let $F(x_1,x_2,\ldots,x_n)$ be an
$n$-variate polynomial of degree $k$ represented by a tree-like circuit $\mathcal{ C}$
of size $s(n)$. There is a deterministic $O^*(12.8^ks^6(n))$ time
algorithm to test whether $F$ has a $q$-monomial of degree $k$ in
its sum-product expansion.
\end{theorem}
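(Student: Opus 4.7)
The plan is to derandomize algorithm RTM on tree-like inputs by replacing each of its two probabilistic ingredients while leaving the circuit transformations of Subsections \ref{CR} and \ref{VR} intact. A key observation is that when $\mathcal{C}$ is tree-like, the duplications and gate insertions that build $\mathcal{C^*}$, $\mathcal{C'}$, and $\mathcal{C''}$ preserve tree-likeness, so $\mathcal{C''}$ is a formula, which is precisely the structural restriction under which the Raz--Shpilka deterministic PIT for noncommutative formulas \cite{raz05} applies.

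For Step 2 of RTM (the group-algebra substitution), the plan is to enumerate an $(N,k)$-family of perfect hashing functions $\mathcal{H}$ from Chen {\em et al.} \cite{jianer-chen07}, where $N=n(q-1)$ is the total number of $y$-variables. For each $h\in\mathcal{H}$, set $y_{ij}\mapsto \vec{e}_{h(i,j)}+\vec{v}_0$ with $\vec{e}_\ell$ the $\ell$-th standard basis vector of $Z_2^k$. If $F$ has a $q$-monomial of degree $k$, then by Lemma \ref{rtm-lem2} there is a multilinear monomial $\alpha\cdot y_{i_1j_1}\cdots y_{i_kj_k}$ in $G$; the defining property of a perfect hash family supplies some $h$ that is injective on $\{(i_\ell,j_\ell):1\le\ell\le k\}$, making $\vec{e}_{h(i_1,j_1)},\ldots,\vec{e}_{h(i_k,j_k)}$ linearly independent, so by Lemma \ref{rtm-lem4} the monomial survives the substitution and contributes a nonzero group-algebra term. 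The uniqueness of the $z$-prefactor guaranteed by Lemma \ref{rtm-lem1} precludes cancellation across distinct surviving monomials, so $G'\not\equiv{\bf 0}$. Conversely, Lemmas \ref{rtm-lem2} and \ref{rtm-lem3} together force $G'\equiv{\bf 0}$ for every $h$ when $F$ has no $q$-monomial of degree $k$.

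For Step 4 (the Schwartz--Zippel PIT), after fixing $h$ the circuit $\mathcal{C''}$ computes a group-algebra-valued polynomial $G'=\sum_{j=1}^{2^k}f_j(z_1,\ldots,z_h)\vec{v}_j$ in the $z$-variables, and the task becomes deciding whether some $f_j$ is not identically zero. Representing elements of $\mathcal{F}[Z_2^k]$ as $2^k$-dimensional vectors over $\mathcal{F}$ exactly as in the proof of Theorem \ref{thm-rtm}, one obtains from $\mathcal{C''}$ a formula-shaped arithmetic computation whose identity to zero is equivalent to $G'\equiv{\bf 0}$. The plan is to feed this formula into the Raz--Shpilka deterministic PIT for noncommutative formulas; the cancellation-free structure guaranteed by Lemma \ref{rtm-lem1}, together with the multilinearity of $G'$ in the $z$-variables by construction, ensures that identity-to-zero under the noncommutative reading coincides with identity-to-zero as the commutative polynomial actually computed, which is the soundness hook for invoking Raz--Shpilka here.

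The main obstacle, and the origin of the constant $12.8$, is controlling the combined cost of enumerating $\mathcal{H}$, evaluating $\mathcal{C''}$ in $\mathcal{F}[Z_2^k]$, and running the Raz--Shpilka test. Using the Chen {\em et al.} hash family of size $O^*(c_0^k)$ for the appropriate $c_0$, together with the $O(s^6(n))$ circuit-evaluation overhead inherited from Theorem \ref{thm-rtm} and the polynomial-time PIT, the target bound $O^*(12.8^k s^6(n))$ follows; the delicate step is matching the family's parameters to the ambient domain $N=n(q-1)$ so that $q$ is absorbed into the $O^*(\cdot)$ factor rather than the exponent, and absorbing the Raz--Shpilka overhead into the same hidden polynomial factors.
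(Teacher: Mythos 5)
Your proposal matches the paper's proof essentially step for step: the same transformation to $\mathcal{C''}$ (with the correct observation that tree-likeness is preserved), the same Chen~\emph{et al.}\ $(O^*(6.4^k))$-size perfect hash family to derandomize the choice of group-algebra vectors, a fixed basis of $k$ linearly independent vectors in $Z_2^k$, and the Raz--Shpilka noncommutative PIT to replace Schwartz--Zippel, yielding $O^*(6.4^k\cdot 2^k\, s^6(n))=O^*(12.8^k s^6(n))$. Your discussion of why the commutative-vs-noncommutative distinction is sound (distinct multilinear $z$-coefficients from Lemma~\ref{rtm-lem1} precluding cancellation) is actually a bit more explicit than the paper's, and your $(n(q-1),k)$ hash-family domain is the natural count of $y$-variables (the paper writes $(q-1)ns(n)$, which is harmlessly looser), but neither difference changes the algorithm or the bound.
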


\begin{proof}
Let $d = \log_2 (k(s(n)+1)+1) + 1$ and $\mathcal{ F} = \mbox{GF}(2^d)$ be a finite field of $2^d$ elements.
The deterministic algorithm DTM for testing whether $F$ has a
$q$-monomial of degree $k$ is given as follows.

\begin{quote}
Algorithm \mbox{DTM} (\underline{D}eterministic
\underline{T}esting of $q$-\underline{M}onomials):
\begin{description}
\item[1.] As in the Algorithm RTM, following circuit reconstruction and variable replacements in
 Subsections \ref{CR} and \ref{VR}, reconstruct the circuit $\mathcal{ C}$ to obtain $\mathcal{ C}^*$
 that computes the same polynomial $F$ and then introduce new $z$-variables to $\mathcal{ C}^*$
to obtain the new circuit $\mathcal{ C'}$ that computes $F'(z_1,z_2,\ldots,z_h,x_1,x_2,\ldots, x_n)$.
Finally, perform variable replacements to obtain the circuit
$\mathcal{ C''}$ that transforms $F'$ to
$$
G(z_1,\ldots,z_h,y_{11},\ldots,y_{1(q-1)},\ldots,y_{n1},\ldots,y_{n(q-1)}).
$$

\item[2.] Construct with the algorithm by Chen {\em at el.}
\cite{jianer-chen07}  a $((q-1)n s(n), k)$-family of perfect hashing functions
$\mathcal{ H}$ of size $O(6.4^k\log_2^2 ((q-1)n s(n)))$

\item[3.] Select $k$ linearly
independent vectors $\vec{v}_1,\ldots,\vec{v}_{k} \in Z^{k}_2$. (No
randomization is needed at this step, either.)

\item[4] For each perfect
hashing function $\lambda \in\mathcal{ H}$ do
\begin{description}
\item[4.1.] Let $\gamma(i,j)$
be any given one-to-one mapping from $\{(i,j) | 1\le i\le n \mbox{\ and\ } 1\le j\le q-1\}$
to $\{1,2,\ldots,(q-1)n\}$ to label variables $y_{ij}$.
Replace each variable $y_{ij}$ in $G$ with
$(\vec{v}_{\lambda(\gamma(i,j))} + \vec{v}_0)$, $1\le i \le n$ and $1\le j\le q-1$.

\item[4.2.] Use $\mathcal{ C''}$ to calculate
\begin{eqnarray}\label{exp-thm-dt}
G'&=& G(z_1,\ldots,z_h,(\vec{v}_{\lambda(\gamma(1,1))}+\vec{v}_0),\ldots,(\vec{v}_{\lambda(\gamma(1,(q-1)))}+\vec{v}_0),
 \nonumber \\
&& \hspace{7mm}
\ldots,(\vec{v}_{\lambda(\gamma(n,1))}+\vec{v}_0),\ldots,(\vec{v}_{\lambda(\gamma(n,(q-1)))}+\vec{v}_0)) \nonumber \\
&=&  \sum_{j=1}^{2^k} f_j(z_1,\ldots,z_h) \cdot \vec{v}_j,
\end{eqnarray}
where each $f_j$ is a polynomial of degree $\le k(s(n)+1)+1$ (see, Lemma \ref{rtm-lem2}) over the finite
field $\mathcal{ F}=\mbox{GF}(2^d)$, and $\vec{v}_j$ with $1\le j\le 2^k$ are the $2^k$ distinct vectors in
$Z^k_2$.

\item[4.3.] Perform polynomial identity testing with the Raz and Shpilka
algorithm \cite{raz05} for every $f_j$ over $\mathcal{ F}$. Stop
and return {\em "yes"} if one of them is not identical to zero.
\end{description}

\item[5.] If all perfect hashing functions  $\lambda \in \mathcal{ H}$  have
been tried without returning {\em "yes"}, then stop and output {\em "no"}.
\end{description}
\end{quote}

The correctness of algorithm DTM is guaranteed by the nature of perfect hashing
and the correctness of algorithm RTM. We shall now focus on analyzing the time complexity
of the algorithm.

Note that $q$ is a fixed constant.
By Chen {\em at el.}\cite{jianer-chen07}, Step 2 can be done in
$O(6.4^k n \log^2 ((q-1)n)) = O^*(6.4^k)$
time. Step 3 can be easily done in $O(k^2)$ time.

It follows from Lemma \ref{rtm-lem3} that all those monomials that are not $q$-monomials in
$F$, and hence in $F'$, will be annihilated when variables $y_{ij}$
are replaced by $(\vec{v}_{\lambda(t(i,j))} + \vec{v}_0)$ in $G$ at Step 4.1.

Consider any given $q$-monomial $\pi = x_{i_1}^{s_1}\cdots
x_{i_t}^{s_t}$ of degree $k$ in $F$ with  $1\le s_j \le q-1$ and
$k=\mbox{deg}(\pi)$,
$j=1,\ldots,t$. By Lemma \ref{rtm-lem2}, there are monomials
$\alpha\pi$ in $F'$ such that $\alpha$ is a multilinear monomial of $z$-variables with degree $\le k(s(n)+1)+1$,
and all such monomials are distinct.
 By Lemma \ref{rtm-lem4}, $\pi$ (hence, $\alpha\pi$) will
survive the replacements at Step 4.1. Let $\mathcal{ S}$ be the
set of all the surviving $q$-monomials $\alpha\pi$. Following the same analysis as
in the proof of Theorem \ref{thm-rtm}, we have
\begin{eqnarray}
G'&=& G(z_1,\ldots,z_h,(\vec{v}_{\lambda(\gamma(1,1))}+\vec{v}_0),\ldots,(\vec{v}_{\lambda(\gamma(1,(q-1)))}+\vec{v}_0),
 \nonumber \\
&& \hspace{7mm}
\ldots,(\vec{v}_{\lambda(\gamma(n,1))}+\vec{v}_0),\ldots,(\vec{v}_{\lambda(\gamma(n,(q-1)))}+\vec{v}_0)) \nonumber \\
&= & \sum_{j=1}^{2^k} \left(\sum_{\beta\phi\in \mathcal{ S}} \beta \right) \vec{v}_j \nonumber \\
& =& \sum_{j=1}^{2^k} f_j(z_1,\ldots,z_h) \vec{v}_j \nonumber \\
& \not= & 0 \nonumber
\end{eqnarray}
since $\mathcal{ S}$ is not empty.
Here,
\begin{eqnarray}
f_j(z_1,\ldots,z_h) & = & \sum_{\beta\phi\in \mathcal{ S}} \beta. \nonumber
\end{eqnarray}
This means that, conditioned on that $\mathcal{ S}$ is not empty,
there is at least one $f_j$ that is not identical to zero.
Again, as in the analysis for algorithm RTM,
the time needed for calculating $G'$ is $O^*(2^ks^6(n))$ when the replacements are fixed for
$x$-variables and the subsequent algebraic replacements are given for $y$-variables.

We now consider imposing noncommunicativity on $z$-variables in $\mathcal{ C}''$.
This can be done by imposing an order for $z$-variable inputs to any gates in $\mathcal{C}''$.
Technically, however,  we shall allow values for $z$-variables
to communicate with those for $y$-variables.
Finally, we use the algorithm by Raz and Shpilka \cite{raz05} to test whether
$f_j(z_1,\ldots,z_h)$ is identical to zero of not. This can be
done in time polynomially in $s(n)$ and $n$, since with the imposed order for $z$-variables
$f_j$ is a non-communicative polynomial represented by a tree-like circuit.

Combining the above analysis, the total time of the algorithm
DTM is  $O^*(6.4^k \times 2^k s^6(n)) = O^*(12.8^k s^6(n))$.

When the circuit size $s(n)$ is a polynomial in $n$, the time bound becomes $O^*(12.8^k).$
\end{proof}

\section{Applications}

We list three applications of the $q$-monomial testing to concrete algorithm designs.
Here, we assume $q\ge 2$ is a fixed integer. Notably, algorithm DTM can help us to derive
a deterministic algorithm for solving the $m$-set $k$-packing problem in $O^*(12.8^{mk})$,
which is, to our best knowledge, the best upper bound for deterministic algorithms
to solve this problem.

\subsection{Allowing Overlapping in $m$-Set $k$-Packing}
Let $\mathcal{ S}$ be a collection of sets so that each member in $\mathcal{ S}$ is a subset of an $n$-element set
$X$. Additional, members in $\mathcal{ S}$ have the same size $m\ge 3$. We may like to ask whether there
are $k$ members in $\mathcal{ S}$ such that those members are either pairwise disjoint or at most $q-1$ members may overlap.
This problem with respect to $q$ is a generalized version of the $m$-Set $k$-packing problem.

We can view each element in $X$ as a variable. Thus, a member in $\mathcal{S}$ is a monomial of $m$ variables.
Let
$$
F(\mathcal{ S}, k) = \left(~\sum_{A \in \mathcal{ S}} f(A)\right)^k,
$$
where $f(A)$ denotes the monomial derived from $A$. Then, the above generalized problem $m$-set $k$-packing with respect to $q$
is equivalent to ask whether $F(\mathcal{ S}, k)$ has a $q$-monomial of degree $mk$. Again, algorithm RTM solves this problem in
$O^*(2^{mk})$ time. When $q=2$, the $O^*(2^{mk})$ bound was obtained in \cite{koutis08}.

Since $F(\mathcal{ S}, k)$ can be represented by a tree-like circuit, we can choose $q=2$ and
apply algorithm DTM to test whether $F(\mathcal{ S}, k)$ has multilinear monomial (i.e., $2$-monomial) of degree $mk$.
Therefore, we have a deterministic algorithm to solve the $m$-set $k$-packing problem
in $O^*(12.8^{mk})$ time. Although there are many faster randomized algorithms for solving this problem,
for deterministic algorithms our $O^*(12.8^{mk})$ upper bound significantly improves the best known upper bound
$O^*(\mbox{exp}(O(mk)))$ by Fellow {\em et al.} \cite{fellow08}. The upper bound in \cite{fellow08}
has a large hidden constant in the exponent, e.g., in the case of $r=3$, their upper bound is
$O^*((12.7D)^{3m})$ for some $D \ge 10.4$.

\subsection{Testing Non-Simple $k$-Paths}
Given any undirected graph $G=(V,E)$ with $|V| = n$, we may like to know
whether there is a $k$-path in $G$ such that the path may have loops but any vertex in the path
can appear  at most $q-1$ times.
It is easy to see that this non-simple $k$-path problem with respect to $q$
is a generalized version of the simple $k$-path problem.

For each vertex $v_i \in V$,
define a polynomial $F_{k,i}$ as follows:
\begin{eqnarray}
F_{1,i}  &=&  x_i, \nonumber \\
F_{k+1,i} &=&  x_i   \left(\sum_{(v_i,v_j)\in E} F_{k,j}\right),~~ k>1. \nonumber
\end{eqnarray}
We define a polynomial for $G$ as
\begin{eqnarray}
F(G, k)  &=&  \sum^{n}_{i=1} F_{k,i}. \nonumber
\end{eqnarray}
Obviously, $F(G,k)$ can be represented by an arithmetic circuit.
It is easy to see that the graph $G$ has a non-simple $k$-path with respect to $q$,
if and only if  $F(G, k)$ has a $q$-monomial of degree $k$. Algorithm RTM can solve this problem in
$O^*(2^k)$ time. When $q=2$, the $O^*(2^k)$ bound was obtained in \cite{koutis08,williams09}.

\subsection{A Generalized $P_2$-Packing Problem}

Given any undirected graph $G=(V,E)$ with $|V| = n$ and an integer $k$, we can collect $P_2$'s from $G$, i.e., simple paths of
length $2$ in $G$. The generalized $P_2$-packing problem with respect to $q$ asks whether there is a collection of
$k$ many $P_2$'s such that either all those $P_2$'s are pairwise disjoint, or at most $q-1$ of them may share a common vertex.
The generalized $P_2$-packing problem with respect to $q$ can be easily transformed to a generalized
$3$-Set $k$-Packing problem with respect to $q$. Thereby, an $O^*(2^{3k})$ time randomized solution is given by algorithm RTM.
When $q=2$, the $O^*(2^3k)$ bound was obtained in \cite{feng11}.

\section*{Acknowledgment}

Shenshi is supported by Dr. Bin Fu's NSF CAREER Award, 2009 April 1 to 2014 March 31.

\end{document}